\theoremstyle{plain}
\newtheorem{theorem}{Theorem}
\newtheorem{example}{Example}
\newtheorem{lemma}{Lemma}
\newtheorem{remark}{Remark}
\newtheorem{corollary}{Corollary}
\newtheorem{definition}{Definition}
\def \bE {\mathbb{E}}
\def \bR {\mathbb{R}}
\def \cN {\mathcal{N}}
\newcommand{\Prob}{\mathbb{P}}
\definecolor{myblue}{rgb}{.8, .8, 1}
\definecolor{mathblue}{rgb}{0.2472, 0.24, 0.6} 
\definecolor{mathred}{rgb}{0.6, 0.24, 0.442893}
\definecolor{mathyellow}{rgb}{0.6, 0.547014, 0.24}
\newcommand{\ampmc}{\hat{a}_{\text{MPMC}}}
\newcommand{\asve}{\hat{a}_{\text{SVE}}}
\newcommand{\rmpmc}{\hat{r}_{\text{MPMC}}}
\newcommand{\rsve}{\hat{r}_{\text{SVE}}}
\crefname{lemma}{Lemma}{Lemmas}
\Crefname{lemma}{Lemma}{Lemmas}
\crefname{thm}{Theorem}{Theorems}
\Crefname{thm}{Theorem}{Theorems}
\DeclareMathOperator*{\argmax}{arg\,max}
\begin{document}

\title{A Multi-armed Bandit MCMC, with applications in sampling from doubly intractable posterior}
\author{Guanyang Wang\thanks{Guanyang Wang is with the Department of Mathematics, Stanford University. Email: \{guanyang\}@stanford.edu.}}

\maketitle

%
%
%
%
%
%
\begin{abstract}
    Markov chain Monte Carlo (MCMC) algorithms are widely used to sample from complicated distributions, especially to sample from the posterior distribution in Bayesian inference. However, MCMC is not directly applicable when facing the doubly intractable problem. In this paper, we discussed and compared two existing solutions -- Pseudo-marginal Monte Carlo and Exchange Algorithm. This paper also proposes a novel algorithm: Multi-armed Bandit MCMC (MABMC), which chooses between two (or more) randomized acceptance ratios in each step. MABMC could be applied directly to incorporate Pseudo-marginal Monte Carlo and Exchange algorithm, with higher average acceptance probability.  
\end{abstract}
\section{Introduction and Problem Formulation}\label{sec:introduction}
Sampling from the posterior is the central part in Bayesian inference. Suppose there is a  family of densities $p_\theta(x)$ on the sample space $x\in \mathcal{X}$, and a prior $\pi(\theta) $ on the parameter space $\Theta$. It is of our interest to sample from the posterior $\pi(\theta|x) \propto \pi(\theta) p_\theta(x)$. Assuming the prior and likelihood can be evaluated at every point, then Markov chain Monte Carlo (MCMC) would be the natural choice. The standard Metropolis-Hastings (M-H) Algorithm (Algorithm \ref{alg:M-H}) constructs a Markov chain on $\Theta$, with stationary distribution $\pi(\theta|x)$. 

\begin{algorithm}
\caption{Metropolis-Hastings Algorithm}\label{alg:M-H}
\hspace*{\algorithmicindent} \textbf{Input:} initial setting $\theta$, number of iterations $T$ \\

\begin{algorithmic}[1]

\For{$t= 1,\cdots T$}
\State Propose $\theta'\sim q(\theta'|\theta)$
\State Compute $$a = \frac{\pi(\theta'|x)q(\theta|\theta')}{\pi(\theta|x)q(\theta'|\theta)} = \frac{\pi(\theta')p_{\theta'}(x)q(\theta|\theta')}{\pi(\theta)p_{\theta}(x)q(\theta'|\theta)}$$
\State Draw $r \sim \text{Uniform}[0,1]$
\State \textbf{If} $(r< a)$ \textbf{then} set $\theta =\theta'$
\EndFor
\end{algorithmic}
\end{algorithm}

Here $a$ is often named as `acceptance ratio', and  $\min\{a,1\}$ is often called `acceptance probability'.

In real situations, however, the likelihood may often be intractable or computationally expensive.  In this scenario the likelihood function is known up to a normalizing constant, that is:
\[
p_\theta(x) = \frac{f_\theta(x)}{Z(\theta)},
\]
where $f_\theta(x)$ can be evaluated at every $(x, \theta)\in \mathcal{X}\times \Theta$, but $Z(\theta)$ is unknown. This intractable constants arise in many statistical problems and interesting models, such as image analysis \cite{besag1986statistical}, Gaussian graphical models \cite{roverato2002hyper}, Ising models \cite{pettitt2003efficient}.

Suppose one is still interested in sampling from the posterior $\pi(\theta|x)$, a standard Metropolis-Hastings algorithm (Algorithm \ref{alg:M-H}) gives  acceptance ratio:

\[
a = \frac{\pi(\theta')f_{\theta'}(x)q(\theta|\theta')}{\pi(\theta)f_{\theta}(x)q(\theta'|\theta)}\frac{Z(\theta)}{Z(\theta')},
\]
which cannot be calculated due to unknown ratio $\frac{Z(\theta)}{Z(\theta')}$.

This problem is known as `doubly intractable problem', as $Z(\theta)$ and $Z(\theta')$ are both unknown. Based on the idea of estimating the normalizing constant or the likelihood function, a wide range of techniques are proposed, such as  maximum pseudo-likelihood estimator \cite{besag1974spatial}, ratio importance sampling \cite{chen1997monte}, bridge sampling \cite{meng1996simulating}, path sampling \cite{gelman1998simulating}. These methods use different approaches to estimate $Z(\theta)$ and $Z(\theta')$, plugging the estimator into the expression of $a$. However, it breaks the detailed balance and causes the Markov chain not converging to the correct stationary distribution, which may cause problems.  

The Pseudo-marginal Monte Carlo is first introduced in \cite{beaumont2003estimation}. M{\o}ller et al. \cite{moller2006efficient}  proposed a `single auxiliary variable method' , which is a special case of the pseudo marginal Monte Carlo approaches. This algorithm is asymptotically exact, i.e., the Markov chain converges to the correct stationary distribution.  The convergence rate of Pseudo-marginal Markov chain can be found in \cite{andrieu2009pseudo}, \cite{andrieu2015convergence}.

In 2006, Murray et al. \cite{murray2012mcmc} proposed the `exchange algorithm', which provides a generalization of  M{\o}ller et al. \cite{moller2006efficient}. The exchange algorithm also proposes auxiliary variables in each iteration and is asymptotically exact. However, the exchange algorithm requires perfect sampling from $p_{\theta'}$ which is not practical in many cases. Usually sampling from $p_{\theta'}$ also requires doing MCMC and would be very slow.  Therefore there are several results working on proposing variants of exchange algorithm to tackle this problem. For example, Faming Liang et al. proposed `doubly Metropolis-Hastings sampler'\cite{liang2010double} and `adaptive exchange algorithm'\cite{liang2016adaptive}, Alquier et al. proposed   `Noisy Monte Carlo' \cite{alquier2016noisy}. All these algorithms run Markov chains using approximate transition kernels, thus the stationary distribution is still no longer the exact posterior distribution and may not even exist.

The modified pseudo-marginal Monte Carlo which we will introduce later and the exchange algorithm construct Markov chains with randomized acceptance probability by introducing auxiliary variables at each iteration, which will both be referred to as `Randomized MCMC' (RMCMC) hereafter. 

This paper provides a comparison of the two algorithms and a new MCMC algorithm, which chooses between these two algorithms adaptively at each iteration, obtaining better acceptance probabilities. In Section \ref{review}, we reviewed the two RMCMC algorithms and explained a statistical point of view of the two algorithms, which provides the main motivation of this paper. In Section \ref{sec: two examples},  two examples are introduced as a comparison between the two algorithms. In the first example, the exchange algorithm performs better than the Pseudo-marginal Monte Carlo and in the second example, vise versa.  In Section \ref{sec: Pseudo-marginal Exchange},  we propose a new algorithm: Multi-armed Bandit MCMC (MABMC) which is a combination of the two RMCMC algorithms, obtaining higher acceptance probability.

\section{Review of the two RMCMC algorithms}\label{review}
\subsection{Pseudo-marginal Monte Carlo (PMC)}
To tackle the problem of the unknown ratio $\frac{Z(\theta)}{Z(\theta')}$, M{\o}ller et al. \cite{moller2006efficient} introduced an auxiliary variable $y$ which also takes value on the state space $\mathcal{X}$, the joint distribution is designed to be:
\[
\pi(x,y,\theta) = \pi(y|x,\theta) \frac{f_\theta(x)}{Z(\theta)}\pi(\theta)
\]
and therefore the previous joint distribution $\pi(x,\theta)$ is unaffected. Therefore, if we could sample from $\pi(y,\theta|x)$,  the marginal distribution of $\theta$ would be our target $\pi(\theta|x)$. The Pseudo-marginal Monte Carlo algorithm is designed as follows:

\begin{algorithm}
\caption{Pseudo-marginal Monte Carlo Algorithm}\label{alg:pseudo}
\hspace*{\algorithmicindent} \textbf{Input:} initial setting $(\theta, y)$, number of iterations $T$ \\
\begin{algorithmic}[1]
\For{$t= 1,\cdots T$}
\State Generate $\theta'\sim q(\theta'|\theta)$
\State Generate $y' \sim p_{\theta'}(y')= f_{\theta'}(y')/Z(\theta') $
\State Compute $$a =  \frac{\pi(\theta')q(\theta|\theta')f_{\theta'}(x)}{\pi(\theta)q(\theta'|\theta)f_{\theta}(x)}\cdot \frac{f_\theta(y)\pi(y'|x,\theta')}{f_{\theta'}(y')\pi(y|x,\theta)}$$
\State Draw $r \sim \text{Uniform}[0,1]$
\State \textbf{If} $(r< a)$ \textbf{then} set $(\theta, y)= (\theta',y')$.
\EndFor
\end{algorithmic}
\end{algorithm}

The auxiliary density $\pi(y|x,\theta)$ can be chosen by an arbitrary distribution. The only requirement for  $\pi(y|x,\theta)$ is that, for every $\theta$, the support of $\pi(y|x,\theta)$ contains the support of $p_{\theta}(x)$. For example, potential choice for $\pi(y|x,\theta)$ is, uniform distribution on $[0,1]$ if $\mathcal X = [0,1]$, or normal distribution with mean $\theta$ and variance $1$ if $\mathcal X = \bR$, or $\pi(y|x,\theta) = p_{\hat{\theta}}(y)$ where $\hat{\theta}$ is an estimator of $\theta$. The choice of $\pi(y|x,\theta)$ will not affect the correctness of the pseudo-marginal algorithm, but will definitely have a strong impact on the efficiency of the Markov chain. Detailed discussions and suggestions in choosing a proper
auxiliary density can be found in M{\o}ller et al. \cite{moller2006efficient}

PMC can be viewed as a M-H algorithm on the space $\Theta\times \mathcal{X}$, with transition kernel $q(\theta',y'|\theta, y) = q(\theta'|\theta) p_{\theta'}(y')$. With this choice of transition kernel, the acceptance ratio of the M-H algorithm becomes
\begin{align*}
a &= \frac{\pi(\theta',y'|x) } {\pi(\theta,y|x)}\cdot \frac{q(\theta|\theta') p_{\theta}(y)}{q(\theta'|\theta) p_{\theta'}(y')} \\
  & = \frac{\pi(\theta')f_{\theta'}(x)\pi(y'|x,\theta')}{\pi(\theta)f_\theta(x)\pi(y|x,\theta)}\cdot \frac{Z(\theta)}{Z(\theta')}\cdot \frac{q(\theta|\theta')}{q(\theta'|\theta)}\cdot \frac{f_\theta(y)}{f_{\theta'}(y')}\cdot \frac{Z(\theta')}{Z(\theta)}\\
  & = \frac{\pi(\theta')q(\theta|\theta')f_{\theta'}(x)}{\pi(\theta)q(\theta'|\theta)f_{\theta}(x)}\cdot \frac{f_\theta(y)\pi(y'|x,\theta')}{f_{\theta'}(y')\pi(y|x,\theta)} .\\
\end{align*}

Therefore the acceptance ratio does not depend on the unknown term $\frac{Z(\theta)}{Z(\theta')}$ .

One of the most important assumptions we made here is doing exact sampling from $p_{\theta'}(y')= f_{\theta'}(y')/Z(\theta')$ (Step 2 in Algorithm \ref{alg:pseudo}). As $Z(\theta')$ is unknown, this step is not easy and often not doable. Surprisingly, perfect sampling without knowing the normalizing constant is sometimes still possible using the `coupling from the past' method, see \cite{propp1996exact} for details.
However, in more cases, we usually establish another Markov chain with stationary distribution $p_{\theta'}(y')$ as an approximation in practice. 
\subsection{Modified Pseudo-marginal Monte Carlo (MPMC)}
The state space of PMC is $\Theta\times\mathcal{X}$, while the state space of exchange algorithm (described in Section \ref{subsec:exchange}) is $\Theta$. Therefore we provide a modified version of PMC, which is essentially the same as PMC but with state space $\Theta$, making it possible for comparing the two algorithms and incorporating them together.

The modified Pseudo-marginal Monte Carlo is designed as follows:
\newpage
\begin{algorithm}
\caption{Modified Pseudo-marginal Monte Carlo Algorithm}\label{alg:mpseudo}
\hspace*{\algorithmicindent} \textbf{Input:} initial setting $\theta$, number of iterations $T$ \\

\begin{algorithmic}[1]

\For{$t= 1,\cdots T$}
\State Propose $\theta'\sim q(\theta'|\theta)$
\State Propose $y \sim \pi(y|x,\theta)$
 and  $y' \sim p_{\theta'}(y')= f_{\theta'}(y')/Z(\theta') $
\State Compute $$a =  \frac{\pi(\theta')q(\theta|\theta')f_{\theta'}(x)}{\pi(\theta)q(\theta'|\theta)f_{\theta}(x)}\cdot \frac{f_\theta(y)\pi(y'|x,\theta')}{f_{\theta'}(y')\pi(y|x,\theta)}$$
\State Draw $r \sim \text{Uniform}[0,1]$
\State \textbf{If} $(r< a)$ \textbf{then} set $\theta = \theta'$.
\EndFor
\end{algorithmic}
\end{algorithm}

Before proving that MPMC is a valid Monte Carlo algorithm, we first discuss the difference between PMC and MPMC. For PMC, in each step there is an attempted move from $(y,\theta)$ to $(y', \theta')$ according to the kernel 
$q(\theta'|\theta) p_{\theta'}(y')$, and the acceptance ratio $a$ is calculated according to the M-H algorithm, therefore randomness only occurs in the process of generating a new proposal $(y',\theta')$. 

For MPMC, however, in each step the attempted move is proposed from $\theta$ to $\theta'$ according to $q(\theta'|\theta)$. Then two  auxiliary variables $(y, y') \sim \pi(y|x,\theta) p_{\theta'}(y')$ are generated and the corresponding acceptance ratio depends on the value of random variables $y, y'$. Randomness comes not only from the proposal step, but also from the procedure of calculating the acceptance ratio, which is different from PMC, or other standard M-H type algorithms. Those M-H algorithms with randomized acceptance ratio would be referred to as \textbf{randomized Markov chain Monte Carlo (RMCMC)}. MPMC and exchange algorithms are two typical examples of RMCMC algorithms.

 Now we prove the basic property of MPMC:

\begin{lemma} \label{lemma: detailed balance for MPMC}
The Modified Pseudo-marginal Monte Carlo Algorithm satisfies the detailed balance, i.e., 

\[
\pi(\theta |x) p(\theta \rightarrow \theta) = \pi(\theta'|x) p(\theta' \rightarrow \theta)  
\]
\end{lemma}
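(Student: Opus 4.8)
The plan is to verify detailed balance directly at the level of the transition kernel, treating the randomized acceptance ratio as a genuine random quantity that must be integrated out. First I would write the off-diagonal transition density explicitly: for $\theta \neq \theta'$,
\[
p(\theta \to \theta') = q(\theta'|\theta) \int\!\!\int \pi(y|x,\theta)\, p_{\theta'}(y')\, \min\{a(y,y'),1\}\, \diff y\, \diff y',
\]
where $a(y,y')$ is the ratio in Step 3 of Algorithm \ref{alg:mpseudo}. It suffices to treat $\theta \neq \theta'$, since the self-transition contributes identically to both sides. The key structural difference from ordinary M--H is that here the acceptance probability is averaged over the auxiliary pair $(y,y') \sim \pi(y|x,\theta)\, p_{\theta'}(y')$, so the detailed-balance check must survive this extra integration.

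The core computation is to multiply the forward measure $\pi(\theta|x)\, q(\theta'|\theta)\, \pi(y|x,\theta)\, p_{\theta'}(y')$ by the ratio $a(y,y')$ and watch the intractable constants cancel. Substituting $\pi(\theta|x)\propto \pi(\theta)f_\theta(x)/Z(\theta)$ and $p_{\theta'}(y')=f_{\theta'}(y')/Z(\theta')$, the factors $\pi(\theta)$, $f_\theta(x)$, $q(\theta'|\theta)$, $\pi(y|x,\theta)$ and $f_{\theta'}(y')$ all cancel against the corresponding terms in $a(y,y')$, and $Z(\theta)$, $Z(\theta')$ survive only in the combination $\pi(\theta'|x)\propto \pi(\theta')f_{\theta'}(x)/Z(\theta')$, so nothing intractable remains. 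I expect this to collapse exactly to
\[
\pi(\theta|x)\, q(\theta'|\theta)\, \pi(y|x,\theta)\, p_{\theta'}(y')\cdot a(y,y') = \pi(\theta'|x)\, q(\theta|\theta')\, \pi(y'|x,\theta')\, p_{\theta}(y),
\]
i.e. the forward measure weighted by $a$ equals the reverse measure with the auxiliary coordinates swapped $y\leftrightarrow y'$.

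From here I would apply the elementary identity $\min\{c,1\}=c\,\min\{1/c,1\}$ to rewrite $\min\{a(y,y'),1\}$ as $a(y,y')\,\min\{1/a(y,y'),1\}$, and then observe that $1/a(y,y')$ is precisely the reverse acceptance ratio $a_{\mathrm{rev}}(y',y)$ obtained by running Algorithm \ref{alg:mpseudo} from $\theta'$ back to $\theta$ with auxiliary variables $(\tilde y,\tilde y')=(y',y)$. Combined with the displayed measure identity, this turns the integrand of $\pi(\theta|x)\,p(\theta\to\theta')$ into exactly the integrand of $\pi(\theta'|x)\,p(\theta'\to\theta)$ under the substitution $(y,y')\mapsto(y',y)$. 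Integrating over $(y,y')$ and relabeling the dummy variables then gives $\pi(\theta|x)\,p(\theta\to\theta')=\pi(\theta'|x)\,p(\theta'\to\theta)$.

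The main obstacle is conceptual rather than computational: one must correctly account for the fact that the acceptance ratio is random, so the ``reverse'' comparison is not between two fixed ratios but between two integrals over auxiliary variables drawn from \emph{different} laws ($y\sim\pi(\cdot|x,\theta)$, $y'\sim p_{\theta'}$ in the forward move versus $\tilde y\sim\pi(\cdot|x,\theta')$, $\tilde y'\sim p_\theta$ in the backward move). Identifying the right change of variables --- swapping $y$ and $y'$ so that the forward sampling law is carried onto the backward one while $1/a$ is carried onto $a_{\mathrm{rev}}$ --- is the step that makes everything align. Once that bijection is in hand, the cancellation of normalizing constants and the $\min$-identity are routine.
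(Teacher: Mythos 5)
Your proof is correct and follows essentially the same route as the paper's: both are direct verifications that, after the intractable constants cancel, the forward integrand $\pi(\theta|x)\,q(\theta'|\theta)\,\pi(y|x,\theta)\,p_{\theta'}(y')\min\{a,1\}$ turns into the backward integrand under the relabeling $y \leftrightarrow y'$ of the auxiliary pair. The only difference is cosmetic packaging --- the paper pulls the densities inside the $\min$ and reads off the symmetry of the resulting expression, while you use $\min\{a,1\} = a\min\{1/a,1\}$ together with the identification of $1/a(y,y')$ as the reverse ratio $a_{\mathrm{rev}}(y',y)$ --- which is an algebraically equivalent form of the same computation.
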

\begin{proof}
First we calculate the transition probability $p(\theta\rightarrow \theta')$, notice that to move from $\theta$ to $\theta'$, one has to first propose $\theta'$ according to the density $q(\theta‘|\theta)$ and then accept, and the acceptance probability depends on $y$ and $y'$ thus we need to take expectation with respect to $(y,y'))$. Therefore,
\begin{align*}
p(\theta' \rightarrow \theta')  & = q(\theta'|\theta) \bE_{y,y'}\min\Bigg\{\bigg( \frac{\pi(\theta')q(\theta|\theta')f_{\theta'}(x)}{\pi(\theta)q(\theta'|\theta)f_{\theta}(x)}\cdot \frac{f_\theta(y)\pi(y'|x,\theta')}{f_{\theta'}(y')\pi(y|x,\theta)}\bigg), 1 \Bigg\} \\
& =  q(\theta'|\theta)\int \int \min\Bigg\{\bigg( \frac{\pi(\theta')q(\theta|\theta')f_{\theta'}(x)}{\pi(\theta)q(\theta'|\theta)f_{\theta}(x)}\cdot \frac{f_\theta(y)\pi(y'|x,\theta')}{f_{\theta'}(y')\pi(y|x,\theta)}\bigg), 1 \Bigg\}  p_{\theta'} (y')\pi(y|x,\theta) d y d y'\\
& =   \int \int  \min\Bigg\{ \frac{\pi(\theta')q(\theta|\theta')p_{\theta'}(x)}{\pi(\theta) p_{\theta}(x)}\cdot p_\theta(y)\pi(y'|x,\theta'), p_{\theta'}(y') q(\theta'|\theta)\pi(y|x,\theta) \Bigg\} d y d y'.\\
\end{align*}
So we have
\begin{align*}
\pi(\theta |x) p(\theta \rightarrow \theta')  & = \frac{1}{\pi(x)} \pi(\theta) p_\theta(\theta|x) p(\theta \rightarrow \theta') \\
& = \frac 1 {\pi(x)} \int \int  \min\Bigg\{ \pi(\theta')p_{\theta'}(x)q(\theta|\theta') p_\theta(y)\pi(y'|x,\theta'),\\
&\qquad \qquad\pi(\theta) p_{\theta}(x) q(\theta'|\theta)p_{\theta'}(y') \pi(y|x,\theta) \Bigg\} dy dy' \\
& = \frac 1 {\pi(x)} \int \int  \min\Bigg\{ \pi(\theta')p_{\theta'}(x)q(\theta|\theta') p_\theta(y')\pi(y|x,\theta'),\\
&\qquad\qquad\pi(\theta) p_{\theta}(x) q(\theta'|\theta)p_{\theta'}(y) \pi(y'|x,\theta) \Bigg\} dy dy' \\
& = \pi(\theta'|x)p(\theta'\rightarrow \theta )
\end{align*}
The last equality comes from the fact that for any integrable function $f(y,y')$, 
\[
\int \int f(y, y') dy dy' = \int \int f(y',y) dy dy'
\]
\end{proof}

Lemma \ref{lemma: detailed balance for MPMC} implies MPMC constructs a reversible Markov chain with stationary distribution $\pi(\theta|x)$. 

It is not hard to show that (essentially one-step Janson's inequality), comparing with the original M-H chain (assuming the normalizing constant is known), the MPMC chain is less statically efficient. For all $\theta, \theta \in \Theta$, $a_{MPMC}(\theta, \theta') \leq a_{MH}(\theta, \theta')$. This puts MPMC chain below M-H chain in the ordering of Peskun \cite{peskun1973optimum}, although the M-H chain is not achievable here in our case.

The convergence property of MPMC requires more careful analysis. Nicholls et al. gives useful results on the convergence results for randomized MCMC \cite{nicholls2012coupled}. In our case, briefly speaking, if the original M-H chain  is uniformly ergodic or geometrically ergodic and the ratio \[\frac{f_\theta(y)\pi(y'|x,\theta')}{f_{\theta'}(y')\pi(y|x,\theta)}\]
is upper and lower bounded by a positive number, then so is the MPMC chain.

\subsection{Single Variable Exchange Algorithm (SVE)}\label{subsec:exchange}
The exchange algorithm is another RMCMC algorithm, which is similar to MPMC. However, the acceptance ratio is calculated (estimated) in a different way.

\begin{algorithm}
\caption{Exchange Algorithm}\label{alg:Exchange}
\hspace*{\algorithmicindent} \textbf{Input:} initial setting $\theta$, number of iterations $T$ \\

\begin{algorithmic}[1]

\For{$t= 1,\cdots T$}
\State Generate $\theta'\sim q(\theta'|\theta)$
\State Generate an auxiliary variable $w\sim f_{\theta'}(w)/Z(\theta')$
\State Compute $$a =  \frac{\pi(\theta')q(\theta|\theta')f_{\theta'}(x)}{\pi(\theta)q(\theta'|\theta)f_{\theta}(x)}\cdot \frac{f_\theta(w)}{f_{\theta'}(w)}$$
\State Draw $r \sim \text{Uniform}[0,1]$
\State \textbf{If} $(r< a)$ \textbf{then} set $\theta =\theta'$
\EndFor
\end{algorithmic}
\end{algorithm}
For SVE, in each step there is an attempted move from $\theta$ to $\theta'$ according to the same transition kernel $q(\theta'|\theta)$, however, SVE only generates one auxiliary variable $w \sim p_{\theta'}(w)$ and the acceptance ratio depends on $w$. SVE also preserves detailed balance.
\begin{lemma} \label{lemma: detailed balance for SVE}
The Single variable exchange algorithm satisfies the detailed balance, i.e., 

\[
\pi(\theta |x) p(\theta \rightarrow \theta) = \pi(\theta'|x) p(\theta' \rightarrow \theta)  
\]
\end{lemma}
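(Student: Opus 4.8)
The plan is to mirror the proof of Lemma~\ref{lemma: detailed balance for MPMC}, exploiting the fact that SVE involves only a single auxiliary variable $w$. First I would write the one-step transition probability from $\theta$ to $\theta'$ by conditioning on the proposal and then averaging the acceptance probability over $w\sim p_{\theta'}(w)=f_{\theta'}(w)/Z(\theta')$:
\[
p(\theta\to\theta') = q(\theta'|\theta)\,\bE_w\min\Bigg\{\frac{\pi(\theta')q(\theta|\theta')f_{\theta'}(x)}{\pi(\theta)q(\theta'|\theta)f_\theta(x)}\cdot\frac{f_\theta(w)}{f_{\theta'}(w)},\,1\Bigg\}.
\]
Writing this expectation as an integral against the density $f_{\theta'}(w)/Z(\theta')$ produces a single integral over $w$.

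The key step is to multiply through by $\pi(\theta|x)=\pi(\theta)f_\theta(x)/(\pi(x)Z(\theta))$ and then absorb every positive $\theta$- and $w$-dependent factor into the $\min\{\cdot,1\}$, which is legitimate since all such factors are nonnegative. Here the essential cancellation occurs: the sampling density contributes a factor $f_{\theta'}(w)$ that cancels the $f_{\theta'}(w)$ in the denominator of the ratio, so that $\tfrac{f_\theta(w)}{f_{\theta'}(w)}\cdot f_{\theta'}(w)$ collapses to $f_\theta(w)$. After this bookkeeping I expect to reach
\[
\pi(\theta|x)\,p(\theta\to\theta') = \frac{1}{\pi(x)Z(\theta)Z(\theta')}\int \min\Big\{\pi(\theta')q(\theta|\theta')f_{\theta'}(x)f_\theta(w),\;\pi(\theta)q(\theta'|\theta)f_\theta(x)f_{\theta'}(w)\Big\}\,dw.
\]

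Finally I would observe that this expression is manifestly invariant under the exchange $\theta\leftrightarrow\theta'$: the prefactor $1/(\pi(x)Z(\theta)Z(\theta'))$ is symmetric, and swapping $\theta$ with $\theta'$ merely interchanges the two arguments of $\min$, leaving the integrand pointwise unchanged. Hence $\pi(\theta|x)\,p(\theta\to\theta')=\pi(\theta'|x)\,p(\theta'\to\theta)$, which is the asserted detailed balance.

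Unlike the MPMC case, no relabeling of dummy integration variables is required: because there is only one auxiliary variable, symmetry is immediate once the $f_{\theta'}(w)$ cancellation has been performed. The only mild obstacle is the purely clerical one of tracking which factors are pulled inside the $\min$ and confirming that the resulting integrand is genuinely symmetric in $\theta$ and $\theta'$; there are no analytic subtleties.
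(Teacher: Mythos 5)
Your proof is correct and follows exactly the route the paper indicates: the paper omits a written proof of this lemma, stating only that it is ``very similar'' to the proof of Lemma~\ref{lemma: detailed balance for MPMC} (deferring details to the cited reference), and your argument is precisely that mirroring --- integrate the acceptance probability against $f_{\theta'}(w)/Z(\theta')$, absorb all positive factors into the $\min$, and observe that the resulting integrand is symmetric under $\theta\leftrightarrow\theta'$. Your additional remark that no relabeling of dummy variables is needed (unlike the two-variable MPMC case) is also accurate.
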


The proof is very similar to \ref{lemma: detailed balance for SVE} and can be found in \cite{diaconis2018bayesian}.

Similar results on Peskun's ordering and convergence rate can be established for SVE, but it will be omitted here as this is not of our main focus in this paper.
\newpage
\subsection{ MPMC versus SVE, a statistical point of view}
In the abstract of Murray's exchange algorithm paper \cite{murray2012mcmc}, the authors claimed that SVE achieves better acceptance probability than PMC, and is justified by numerical simulation. This motivates us to raise the following question:

\begin{itemize}
    \item Is it possible to compare PMC and SVE theoretically?
\end{itemize}

As the state spaces of PMC and SVE are different, it makes more sense to compare SVE with MPMC. In this part, we provide a statistics point of view of SVE and PEMC, which also provides intuitions for future sections.

Recall that in standard M-H algorithm, given the stationary distribution $\pi(\theta|x) \propto \pi(\theta) \frac{f_\theta(x)}{Z(\theta)}$ and transition kernel $q(\theta|\theta')$, the acceptance ratio is 
\[
a = \frac{\pi(\theta')f_{\theta'}(x)q(\theta|\theta')}{\pi(\theta)f_{\theta}(x)q(\theta'|\theta)}\frac{Z(\theta)}{Z(\theta')}.
\]

All the terms can be computed except the ratio of unknown normalizing constants $\frac{Z(\theta)}{Z(\theta')}$. The obvious idea is to find an estimator of $Z(\theta)/Z(\theta')$ and plug it into the expression of acceptance ratio. This is widely used in practice, however, as mention in Section \ref{sec:introduction}, such estimators without other constraint will break the detailed balance and the corresponding Markov chain is not guaranteed to converge to the desired stationary distribution. Heuristically speaking, the idea of two RMCMC algorithms (MPMC and SVE) is to find a `good' estimator of $\frac{Z(\theta)}{Z(\theta')}$. The word `good' here means the estimator should preserve detailed balance of the resulting chain. It will soon be clear that the only difference between MPMC and SVE is that they use different estimators (denoted by $\ampmc$ and $\asve$, respectively) to estimate acceptance ratio $a$.

To be specific, in MPMC, the ratio 
$\frac{Z(\theta)}{Z(\theta')}$ is estimated by:
\begin{align*}
\frac{f_\theta(y) \pi(y'|x,\theta')}{f_{\theta'}(y')\pi(y|x,\theta)} \qquad \text{where} \quad  (y,y')|\theta,\theta' \sim \pi(y|x,\theta)\cdot  p_{\theta'}(y').
\end{align*}

Therefore the resulting randomized acceptance ratio is given by:
\[
\ampmc =  \frac{\pi(\theta')q(\theta|\theta')f_{\theta'}(x)}{\pi(\theta)q(\theta'|\theta)f_{\theta}(x)}\cdot \frac{f_\theta(y)\pi(y'|x,\theta')}{f_{\theta'}(y')\pi(y|x,\theta)}。
\]

$\ampmc$ is unbiased since
\begin{align*}
\bE_{(y,y')}\Bigg[\frac{f_\theta(y) \pi(y'|x,\theta')}{f_{\theta'}(y')\pi(y|x,\theta)} \Bigg] & = \int \frac{f_\theta(y)\pi(y'|x,\theta')}{f_{\theta'}(y')\pi(y|x,\theta)} \pi(y|x,\theta)\cdot  p_{\theta'}(y') dy dy'  \\
& = \frac{Z(\theta)}{Z(\theta')} \int p_\theta(y) \pi(y'|x,\theta') dy dy'\\
& = \frac{Z(\theta)}{Z(\theta')} ,
\end{align*}
as we proved in Lemma \ref{lemma: detailed balance for MPMC}, unbiasness preserves the detailed balance of MPMC, which guarantees the asymptotic exactness of MPMC algorithm.

Similarly, for SVE, the ratio $\frac{Z(\theta)}{Z(\theta')}$ is estimated by 
\begin{align*}
    \frac{f_\theta(w)}{f_{\theta'}(w)} \qquad\text{where}\quad w|\theta,\theta'\sim p_{\theta'}(w).
\end{align*}

Therefore the resulting randomized acceptance ratio is given by:
$$\asve =  \frac{\pi(\theta')q(\theta|\theta')f_{\theta'}(x)}{\pi(\theta)q(\theta'|\theta)f_{\theta}(x)}\cdot  \frac{f_\theta(w)}{f_{\theta'}(w)}$$

$\asve $ is clearly unbiased since
\begin{align*}
    \bE_w\Bigg[\frac{f_\theta(w)}{f_{\theta'}(w)}\Bigg] & = \int \frac{f_\theta(w)}{f_{\theta'}(w)} p_{\theta'}(w) dw \\
    & = \frac{Z(\theta)}{Z(\theta')} \int p_{\theta}(w) dw \\
    & = \frac{Z(\theta)}{Z(\theta')},
\end{align*}
 and again, unbiaseedness guarantees detailed balance.
\begin{remark}
This not necessary implies unbiasedness is the sufficient and necessary condition for designing an estimator of $\frac{Z(\theta)}{Z(\theta'}$.
\end{remark}

To summarize, given an attempted move from $\theta$ to $\theta'$, the acceptance ratio $a(\theta,\theta')$ is estimated by two unbiased estimators, $\ampmc$ and $\asve$. Then the acceptance probability $r(\theta, \theta)$ is estimated by 
\[
\rmpmc = \min\{\ampmc, 1\}
\]
and 
\[
\rsve = \min\{\asve, 1\}
\]
respectively.

Therefore comparing the two algorithms is equivalent to comparing the performance of the two estimators. As both $\ampmc$ and $\asve$ are unbiased, the performance only depends on the variance of the two estimators. The following theorem characterize the relative mean square error for both estimators by Pearson Chi-square distances $\chi_P$.

\begin{theorem} \label{thm: chi-square characterization}
Let 
\begin{align*}
    \text{RE}(\hat a) \doteq  \frac{(\hat a - a)^2}{a^2}
\end{align*}
be the relative mean square error of estimator $\hat a$, then we have 
\begin{align*}
\text{RE}(\asve) = \chi_P(p_{\theta'}, p_\theta) 
\end{align*}
and
\begin{align*}
\text{RE}(\ampmc) = \chi_P\big(p_{\theta'}(y')\pi(y|x,\theta), p_\theta(y)\pi(y'|x,\theta')\big),
\end{align*}
where $$\chi_p(f, g) = \frac{(f(x) - g(x))^2}{f(x)} dx.$$
\end{theorem}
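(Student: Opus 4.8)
The plan is to reduce both claims to a single algebraic identity: for any \emph{unbiased} estimator $\hat R$ of the ratio $R \doteq Z(\theta)/Z(\theta')$, the normalized error $\bE[(\hat R - R)^2]/R^2$ equals a Pearson $\chi^2$-distance between the sampling density of the auxiliary draws and a ``swapped'' companion density. First I would strip off the computable prefactor. Writing $c = \frac{\pi(\theta')q(\theta|\theta')f_{\theta'}(x)}{\pi(\theta)q(\theta'|\theta)f_{\theta}(x)}$, both estimators have the form $\hat a = c\,\hat R$ while the true ratio is $a = c R$, so that
\[
\frac{(\hat a - a)^2}{a^2} = \frac{(\hat R - R)^2}{R^2},
\]
and $c$ cancels identically. (I read $\text{RE}(\hat a)$ as $\bE\big[(\hat a-a)^2/a^2\big]$, the expectation over the auxiliary variables, since $a$ is a fixed number.) Hence it suffices to analyze the relative mean square error of the two unbiased estimators of $R$.

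Second, I would record two elementary expansions. Since each $\hat R$ is unbiased, $\bE[\hat R]=R$, so $\bE\big[(\hat R - R)^2/R^2\big] = \bE[\hat R^2]/R^2 - 1$. On the other side, for any two probability densities $f,g$ on a common space,
\[
\chi_P(f,g) = \int \frac{(f-g)^2}{f} = \int \frac{g^2}{f} - 2\int g + \int f = \int \frac{g^2}{f} - 1,
\]
using $\int f = \int g = 1$. Thus both target quantities have the shape ``(second-moment integral) $-\,1$'', and the whole theorem reduces to matching $\bE[\hat R^2]/R^2$ with the appropriate $\int g^2/f$.

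Third, for SVE I would substitute $f_\theta = Z(\theta)p_\theta$ and $f_{\theta'} = Z(\theta')p_{\theta'}$ to write the estimator as $\hat R_{\text{SVE}} = f_\theta(w)/f_{\theta'}(w) = R\,p_\theta(w)/p_{\theta'}(w)$ with $w \sim p_{\theta'}$. Then $\bE[\hat R_{\text{SVE}}^2]/R^2 = \int \frac{p_\theta(w)^2}{p_{\theta'}(w)^2}\,p_{\theta'}(w)\,dw = \int \frac{p_\theta^2}{p_{\theta'}}$, which is exactly $\int g^2/f$ with $f = p_{\theta'}$, $g = p_\theta$. Subtracting $1$ gives $\text{RE}(\asve) = \chi_P(p_{\theta'}, p_\theta)$.

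Finally, for MPMC the same factorization is the crux. Setting $F(y,y') = p_{\theta'}(y')\pi(y|x,\theta)$ (the joint sampling density of the auxiliary pair) and $H(y,y') = p_\theta(y)\pi(y'|x,\theta')$, the substitution turns the estimator into $\hat R_{\text{MPMC}} = R\,H(y,y')/F(y,y')$, whence $\bE[\hat R_{\text{MPMC}}^2]/R^2 = \iint \frac{H^2}{F^2}\,F\,dy\,dy' = \iint \frac{H^2}{F}$, yielding $\text{RE}(\ampmc) = \chi_P(F,H)$. The computations are routine once this common structure is seen; the only genuine subtlety is orientation and normalization. I would flag two checkpoints: keeping the sampling density in the \emph{denominator} of $\chi_P$ and the swapped companion in the numerator, and verifying that both $F$ and $H$ are bona fide densities on the product space (each factors into a $y$-density times a $y'$-density and integrates to $1$), since that normalization is precisely what collapses the cross term $-2\int g$ to $-1$ in the expansion above.
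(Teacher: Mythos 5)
Your proposal is correct and follows essentially the same route as the paper's proof: cancel the computable prefactor, substitute $f_\theta = Z(\theta)p_\theta$ and $f_{\theta'} = Z(\theta')p_{\theta'}$ so that each estimator becomes the true ratio times a ratio of (products of) densities, and compute the expectation of the squared relative deviation against the sampling density. The only difference is cosmetic: the paper completes the square pointwise, rewriting the integrand $\left(g/f - 1\right)^2 f$ directly as $(f-g)^2/f$, whereas you expand both sides to the equivalent form $\int g^2/f - 1$ by invoking unbiasedness of the estimator and normalization of the two densities --- the same computation, organized through second moments instead.
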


\begin{proof}
For SVE, we have
\begin{align*}
    \text{RE}(\asve) & = \frac{(\asve - a)^2}{a^2} \\
    & = \int \frac{\Big(f_\theta(w)/f_{\theta'}(w) - Z(\theta)/ Z(\theta')\Big)^2 } {\Big(Z(\theta)/ Z(\theta')\Big)^2} p_{\theta'}(w) dw \\
    & = \int (\frac{p_\theta(w)}{p_{\theta'}(w)} - 1 )^2 p_{\theta'}(w) dw \\
    & = \int \frac{\big(p_\theta(w) - p_{\theta'}(w)\big)^2}{p_{\theta'}(w)} dw\\
    & =  \chi_P(p_{\theta'}, p_\theta).
\end{align*}

Similarly, for MPMC
\begin{align*}
    \text{RE}(\ampmc) & = \frac{(\ampmc - a)^2}{a^2} \\
     & = \int \frac{\Big(f_\theta(y) \pi(y'|x,\theta')/f_{\theta'}(y') \pi(y|x,\theta) - Z(\theta)/ Z(\theta')\Big)^2 } {\Big(Z(\theta)/ Z(\theta')\Big)^2} f_{\theta'}(y') \pi(y|x,\theta) dy dy'\\
     & = \int (\frac{p_\theta(y) \pi(y'|x,\theta')}{p_{\theta'}(y')\pi(y|x,\theta)} -1)^2 p_{\theta'}(y')\pi(y|x,\theta)dy dy'\\
     & = \chi_P\big(p_{\theta'}(y')\pi(y|x,\theta), p_\theta(y)\pi(y'|x,\theta')\big),
\end{align*}
as desired.
\end{proof}

Theorem \ref{thm: chi-square characterization} reveals a significant difference between MPMC and SVE. For MPMC, the choice of $\pi(\cdot|x,\theta)$ would influence the corresponding Peason Chi-square distance, and thus has a stong impace on the efficiency of the Markov chain. The optimal choice of $\pi(\cdot|x, \theta$ is clearly $p_\theta(\cdot)$ itself, such an estimator has $0$ variance and it makes Algorithm \ref{alg:M-H} and \ref{alg:mpseudo} agrees, but is impractical in our case. In practice, this suggests us to choose $\pi(\cdot|x,\theta)$ which is as close to $p_\theta$ as possible. 

For SVE, the variance is controlled by the Pearson Chi-square distance between $p_\theta$ and $p_{\theta'}$.
Roughly speaking, when $p_\theta$ is close to $p_{\theta'}$ (which is often equivalent to $\theta$ is close to $\theta'$), the SVE estimator tends to perform well. However, when $p_\theta$ is far away from $p_{\theta'}$, the SVE estimator may perform poorly due to the large variance.

Therefore, with a properly chosen $\pi(\cdot|x,\theta)$, it is reasonable to have the following heuristics:
\begin{itemize}
    \item When the proposed $\theta'$ is close to $\theta$, the SVE estimator would have better performance, resulting in a higher acceptance probability.
    \item When $\theta'$ is far away from $\theta$, the MPMC estimator would outperform SVE, and one should choose MPMC if possible.
\end{itemize}

The above heuristics suggests it is not possible to conclude that one algorithm dominates the other in all cases. In the next section  two concrete examples will be provided to justify our intuition. Meanwhile, in each step of M-H algorithm, a new $\theta'$ is proposed based on the previous $\theta$, this motivates us to find a method for choosing between MPMC and SVE adaptively, which is described in detail in Section \ref{sec: Pseudo-marginal Exchange}. 

\begin{remark}
In this paper, it is of our main focus to choose between MPMC and SVE. But the methodology proposed in Section \ref{sec: Pseudo-marginal Exchange} is general for all RMCMC algorithms with the same transition kernel. It would be very interesting to construct other RMCMC algorithms which preserves detailed balance. 
\end{remark}

\section{Two concrete examples}\label{sec: two examples}
In this section we will give two concrete examples, and argue that it is impossible to claim that one algorithm would always works better than the other.
\subsection{The first example}\label{eg:first}
Let $\mathcal  X$ be the space with two points, i.e.,  $\mathcal X = \{0,1\}$. Therefore the probability measure on $\mathcal X$ are Bernoulli distributions. Let the parameter space $\Theta$ also consists two points, $\Theta = \{a =0.7, b = 0.6\}$. Here $\Prob_{a}$ corresponds to the probability distribution on $\mathcal X$:
\[
\Prob_a(X = 1) = 0.7 \qquad \Prob_a(X= 0) = 0.3.
\]

Similarly $\Prob_b$ corresponds to the probability distribution  on $\mathcal X$:
\[
\Prob_b(X = 1) = 0.6 \qquad \Prob_b(X= 0) = 0.4.
\]

The prior distribution is chosen to be the uniform distribution over $\Theta$, and suppose the data $x$ equals $1$. Therefore, after simple calculation, the true posterior density would be:
\[
\Prob (\theta = a| x) = \frac 7 {13} \qquad \Prob (\theta = b|x) = \frac 6 {13}.
\]

For both algorithms, the transition probability $q(\cdot|\cdot)$ is the uniform distribution over $\Theta$. To make calculation easier, we choose uniform distribution over $\mathcal X$ as the conditional distribution $\pi(y|x,\theta)$, which is independent with data and parameter.

Now we are ready to calculate the transition probability of both algorithms, we will use $\Prob_{\text{SVE}}$ to denote the transition probability using exchange algorithm, and use  $\Prob_{\text{MPMC}}$ to denote the transition probability using modified Pseudo-marginal Monte Carlo. The transition probabilities can be calculated as follows:

\begin{align*}
 \Prob_{\text{SVE}}(\theta' =b| \theta =a) &= q(b|a)\cdot [ \Prob_b(w = 0) \cdot \min \{\frac{\Prob_b(x)}{\Prob_a(x)}\cdot \frac{\Prob_a(w)}{\Prob_b(w)} ,1\} + \Prob_b(w = 1) \cdot \min \{\frac{\Prob_b(x)}{\Prob_a(x)}\cdot \frac{\Prob_a(w)}{\Prob_b(w)} ,1\}]  \\
 & = \frac 12 \cdot [0.4 \cdot \min\{\frac 67 \times\frac 34, 1\} + 0.6 \cdot \min\{\frac 67 \times\frac 76, 1\}] \\
 & = \frac 37
\end{align*}

\begin{align*}
 \Prob_{\text{SVE}}(\theta' =a| \theta =b) &= q(a|b)\cdot [ \Prob_a(w = 0) \cdot \min \{\frac{\Prob_a(x)}{\Prob_b(x)}\cdot \frac{\Prob_b(w)}{\Prob_a(w)} ,1\} + \Prob_a(w = 1) \cdot \min \{\frac{\Prob_a(x)}{\Prob_b(x)}\cdot \frac{\Prob_b(w)}{\Prob_a(w)} ,1\}]  \\
 & = \frac 12 \cdot [0.3 \cdot \min\{\frac 76 \times\frac 43, 1\} + 0.7 \cdot \min\{\frac 76 \times\frac 67, 1\}] \\
 & = \frac 12
\end{align*}
Similarly, 
\begin{align*}
\Prob_{\text{MPMC}}    (\theta' =b| \theta =a) & =q(b|a)\cdot \bE_{y,y'}\big[\min\{ \frac{\Prob_b(x)}{\Prob_{a}(x)}\cdot \frac{\Prob_a(y)}{\Prob_{b}(y')} , 1 \}\big] \\
 & = \frac 12 \times  (\frac 3 {10} \times  \frac 67 \times  \frac 76 + \frac 3{10} \times \frac 67 \times  \frac 36 + \frac 15\times  \min\{\frac 67 \times  \frac 74, 1\}  + \frac 1 {5 }\times \frac 67 \times  \frac 34 )\\
 & = \frac {11}{28}
\end{align*}

\begin{align*}
\Prob_{\text{MPMC}}    (\theta' =a| \theta =b) & =q(a|b)\cdot \bE_{y,y'}\big[\min\{ \frac{\Prob_a(x)}{\Prob_{b}(x)}\cdot \frac{\Prob_b(y)}{\Prob_{a}(y')} , 1 \}\big] \\
 & = \frac 12 \times  (\frac 7 {20} \times  \frac 76 \times  \frac 67 + \frac 7{20} \times \frac 76 \times  \frac 47 + \frac 3 {20 }\times   \min\{\frac 76 \times  \frac 63, 1\}  + \frac 3 {20 }\times \min\{\frac 76 \times  \frac 43, 1\})\\
 & = \frac {55}{120}.
\end{align*}

Therefore we have $\Prob_{pm}(\theta'|\theta) < \Prob_{ex}(\theta'|\theta)$ for any $\theta' \neq \theta$ and thus the exchange algorithm has higher acceptance transition probability, which means exchange algorithm will converge faster.

\subsection{The second example}\label{eg:second}
The second example is designed to show that Pesudo-marginal Monte Carlo may perform better than exchange algorithm. 
In this example we take $\mathcal  X$ be the space with three points, i.e.,  $\mathcal X = \{0,1, 2\}$, the parameter space $\Theta$ also consists two points, $\Theta = \{a, b \}$. Here $\Prob_{a}$ corresponds to the probability distribution on $\mathcal X$:
\[
\Prob_a(X = 0) = 0.1 \qquad \Prob_a(X= 1) = 0.8 \qquad \Prob_a(X = 2) = 0.1.
\]

Similarly $\Prob_b$ corresponds to the probability distribution  on $\mathcal X$:
\[
\Prob_b(X = 0) = 0.8 \qquad \Prob_b(X= 1) = 0.1 \qquad \Prob_b(X = 2) = 0.1.
\]

The prior distribution are chosen to be the uniform distribution over $\Theta$, and suppose the data $x$ equals $2$. Therefore, after simple calculation, the true posterior density would be:
\[
\Prob (\theta = a| x) = \frac 12 \qquad \Prob (\theta = b|x) = \frac 12.
\]

Similar to the previous example, the transition probability $q(\cdot|\cdot)$ is the uniform distribution over $\Theta$.  The conditional distribution $\pi(y|x,\theta)$ is designed to be the uniform distribution over $\mathcal X$, which is independent with data and parameter.

The last thing we need to specify is the way we initialize variable $y$ in the pseudo-marginal Monte Carlo algorithm, we simply draw $y$ with $\Prob(y = 0) =\Prob (y = 1) = \Prob(y = 2) = \frac 13$. 

In this setting, we are ready to calculate the transition probability of both algorithms, we will still use $P_{\text{SVE}}$ to denote the transition probability using exchange algorithm, and use  $P_{\text{MPMC}}$ to denote the transition probability using pseudo marginal Monte Carlo. The transition probabilities can be calculated as follows:
\begin{align*}
 \Prob_{\text{SVE}}(\theta' =b| \theta =a) &= q(b|a)\cdot \bE _w\Bigg[\min \{\frac{\Prob_b(x)}{\Prob_a(x)}\cdot \frac{\Prob_a(w)}{\Prob_b(w)} ,1\}\Bigg]\\
 & = \frac 12 \cdot [0.8 \cdot \min\{\frac 18 , 1\} + 0.1 \cdot \min\{8, 1\}  + 0.1] \\
 & = \frac{3}{20}
\end{align*}
By symmetry, we also have 
\[
\Prob_{\text{SVE}}(\theta' =a| \theta =b) = \frac {3}{20}. 
\]
Similarly,

\begin{align*}
\Prob_{\text{MPMC}}    (\theta' =a| \theta =b) & =q(a|b)\cdot \bE_{y,y'}\big[\min\{ \frac{\Prob_a(x)}{\Prob_{b}(x)}\cdot \frac{\Prob_b(y)}{\Prob_{a}(y')} , 1 \}\big] \\
 & = \frac 12 \times  (0.1 + 0.1 + 0.8\times \frac 13   + 0.8 \times \frac 13 \times \frac 18 + 0.8 \times \frac 13 \times \frac 18 )\\
 & = \frac {4}{15}.
\end{align*}

and 
\[
\Prob_{\text{MPMC}}(\theta' =a| \theta =b) = \frac {4}{15}.
\]

Therefore we have $\Prob_{pm}(\theta'|\theta) > \Prob_{ex}(\theta'|\theta)$ for any $\theta' \neq \theta$ and thus the Pseudo-marginal algorithm has higher acceptance transition probability, which means it will converge faster.

\subsection{Intuition and discussion}

In this part we briefly talk about the intuition behind the two examples above. Comparing with the ideal Metropolis-Hastings algorithm, the difficult part comes from the unknown ratio of normalizing constant
\[
\frac{Z(\theta)}{Z(\theta')}.
\]

The idea of both algorithms is to generate an auxiliary variable, and use the new random variable to construct an estimator to estimate the ratio of normalizing constants. Therefore, the accuracy of the estimator determines the performance of the algorithm. The main difference between Pseudo-marginal Monte Carlo and exchange algorithm is, the exchange algorithm uses the estimator
\[
\frac{f_\theta(w)}{f_{\theta'}(w)}
\]

to estimate the ratio directly. However, the Pseudo-marginal Monte Carlo uses $f_\theta(y)\pi(y'|x,\theta') $
to estimate $Z(\theta)$ and uses 
$f_{\theta'}(y')\pi(y|x,\theta)$ to estimate $Z(\theta')$, then it uses the quotient as an estimator of $\frac{Z(\theta)}{Z(\theta')}$. Therefore, when the probability measure $\pi_{\theta}$ and $\pi_{\theta'}$ differs a lot (the case in the second example), then the exchange algorithm may perform poorly. But when the two measures are close, then the exchange algorithm may perform better than Pseudo-marginal Monte Carlo.

In the two examples above, one algorithm dominates the other for all pairs $(\theta, \theta')$. In real situations, however, the parameter space $\Theta$ may be much larger than our designed examples, thus the usually there exists two rigions, $R_1, R_2 \subset \Theta\times \Theta$. In $R_1$, PMC performs better than SVE, while in $R_2$, vice versa. Therefore, given a proposal move from $\theta$ to $\theta'$, it is natural to ask the following question:
\begin{itemize}
    \item Is it possible to find a `reasonable' way of choosing between PMC and SVE to improve the acceptance probability ?
\end{itemize}

This question will be answered in the next section.

\section{MABMC: A Multi-armed Bandit MCMC Algorithm}\label{sec: Pseudo-marginal Exchange}
\subsection{How to choose between algorithms `legally'?} \label{subsec:decision rule}
Now we are ready to incorporate the two algorithms together. To make things more general and concrete, this problem can be formulated as follows:

Given an attempt move from $\theta$ to $\theta'$,  and two valid estimates of $a(\theta, \theta)$, denoted by $\hat{a}_1(\theta,\theta')$, $\hat{a}_2(\theta,\theta')$. Can one find a decision rule $D(\theta,\theta') \in \{1,2\}$ such that the new Markov chain with transition probability $ q(\theta,\theta')\min\{a_D(\theta,\theta'), 1\}$ still preserves detailed balance and has higher acceptance probability than either algorithm?

It is worth mentioning that the seemingly obvious choice $\text{argmax}\{\hat{a}_1(\theta,\theta'), \hat{a}_2(\theta,\theta')\}$ is not valid, as this would break the detailed balance and thus the algorithm may not converge to the desired stationary distribution. It turns out that to preserve the detailed balance, the decision rule has to be `symmetric', i.e., $D(\theta, \theta') = D(\theta', \theta)$.

This problem is very similar to the multi-armed bandit problem in probability. In each iteration of the MCMC algorithm, an agent is facing the choose between $\hat a_1$ and $\hat a_2$, with the goal of increasing the acceptance probability. It is of our interest to design a reasonable policy (or make a decision rule) to get better performance than random guess.
\begin{definition}[Valid ratio]\label{def:valid ratio}
Let $\pi(\theta)$ be a probability density on the parameter space $\Theta$, which may be of the form $\pi(\theta) = \frac{f(\theta)}{Z(\theta)}$ where $Z(\theta)$ is intractable.  Let $q(\theta,\theta')$ be a transition kernel in the M-H algorithm. A (randomized) acceptance ratio $\hat{a} (\theta, \theta')$ is called \textbf{valid} if
it preserves the detailed balance with respect to stationary distribution $\pi(\theta)$, i.e.,
\begin{align*}
    \pi(\theta) q(\theta,\theta') \bE\min\{\hat{a}(\theta, \theta'), 1\}  =  \pi(\theta') q(\theta',\theta) \bE\min\{\hat{a}(\theta', \theta), 1\}.
\end{align*}
\end{definition}

\begin{example}
The acceptance ratio introduced in exchange algorithm (Alg \ref{alg:Exchange}) and modified Pseudo-marginal Monte Carlo (Alg \ref{alg:mpseudo}) are both valid.
\end{example}

\begin{definition}[Valid decision rule]
Given the target stationary distribution $\pi(\theta)$, the transition kernel $q(\theta, \theta')$ and two valid acceptance ratio $\hat{a}_1(\theta, \theta')$, $\hat{a}_2(\theta, \theta')$ . A decision rule $D: \Theta \times \Theta \rightarrow \{1, 2\}$ is called \textbf{valid} if the corresponding new acceptance ratio $\hat{a}_D (\theta, \theta')$ is valid.
\end{definition}

Intuitively, in each iteration of the M-H algorithm,  given an attempted move from $\theta$ to $\theta'$, the decision rule $D(\theta, \theta')$ helps one to choose between the two acceptance ratio adaptively, aiming for a higher acceptance probability while still preserving the detailed balance. The decision rule is implicitly random, since the acceptance ratio $\hat{a}_1(\theta, \theta')$, $\hat{a}_2(\theta, \theta')$ are random. 

The following example gives a simple, non-randomized decision rule.
\begin{example}[A simple valid decision rule]\label{eg: simple decision rule} The decision rule $D(\theta, \theta') \equiv 1$ is a valid decision rule. It corresponds to always choosing the first acceptance ratio for each attempted move. Similarly, $D(\theta, \theta') \equiv 2$ is also valid.
\end{example}
We could also define the `Bayes decision rule' and `inadmissible decision rule', which is similar to the definition of statistics in statistical decision theory.
\begin{definition}[Bayes decision rule]\label{def: optimal decision rule}
A valid decision rule $D$ is called \textbf{Bayes} if for any other valid decision rule $\tilde D$, 
\[
\bE{ \min \{\hat a_D, 1\}} \geq \bE{ \min \{\hat a_{\tilde D}, 1\}},
\]
where the expectation is taken over all the randomness (including the transition kernel $q$, the decision procedure, the estimation of the acceptance ratio $a$, and integrating over $\Theta \times \Theta$.
\end{definition}

This decision rule is called `Bayes' since the stationary distribution $\pi(\theta|x)$ together with $q(\theta'|\theta)$ can be regarded as a prior distribution on $\Theta \times \Theta$, and the Bayes decision decision rule $D$ maximized the average acceptance probability according to this prior. Sometimes one may be interested in a point-wise relationship between two decision rules, which motivates the following definition.

\begin{definition}[Inadmissible decision rule]\label{def:inadmissible decision rule}

A decision rule $D_1$ is called `inadmissible' if there exists another decision rule $D_2$, such that for all $(\theta, \theta')\in \Theta \times \Theta$, 
\[
\bE \min\{\hat a_{D_1}(\theta,\theta') , 1 \} \leq \bE \min\{\hat a_{D_2}(\theta,\theta') , 1 \},
\]
and the inequality is strict in at least one point $(\theta_0, \theta'_0)$. A decision rule which is not inadmissible is called \text{admissible} decision rule.
\end{definition}

A decision rule is inadmissible means one could find another decision rule which dominates the previous one. It is clear that a Bayes decision rule would be admissible, as by definition it maximizes the average acceptance probability. Though it is not necessary true that the Bayes decision rule will dominate other decision rules at every point. In general, the calculation of the Bayes decision rule is beyond our knowledge, therefore in this paper we will focus on finding a reasonable rule which is better than random guess, instead of finding the Bayes decision rule.

\begin{example}
In Section \ref{eg:first}, the decision rule $D =$ PMC is inadmissible, in section \ref{eg:second}, $D=$ SVE is inadmissible.
\end{example}

It is not hard to generalize the definition of valid decision rule to $n$ valid acceptance ratio, the formal definition is omitted.

\begin{example}[A max-min decision rule] \label{eg:max-min decision rule} The decision rule 

\[
D(\theta,\theta') = \argmax_{i \in 1,2} \{\min\{r_i(\theta,\theta') , r_i(\theta',\theta)\} \}
\]
where $r_i(\theta,\theta') = \bE \min\{\hat{a_i}(\theta,\theta'),1\} $, is valid.
\end{example}
The max-min decision rule is valid can be proved as a direct corollary after proving Theorem \ref{thm:valid decision rule}. This will be used as the decision rule for the Multi-armed Bandit MCMC (MABMC). The intuition behind the design will be explained later.

\begin{example}[An invalid decision rule] \label{eg:invalid decision} The decision rule
\begin{align*}
    D(\theta,\theta') = \argmax_{i \in 1,2} \{\hat{a}_i(\theta,\theta')\}
\end{align*}
is not valid, (as we will see later) the corresponding acceptance ratio $\hat{a}_D (\theta, \theta')$ is not valid.
\end{example}
\begin{theorem}\label{thm:valid decision rule}
The decision rule $D$ is valid if and only if it is `symmetric', i.e.,
\[
D(\theta,\theta') = D(\theta', \theta) \quad \text{for all} \quad (\theta, \theta') \in K,
\]
where $K$ is a symmetric subset of $\Theta\times \Theta$, defined by 
\[
K \doteq \{(\theta,\theta')\in \Theta\times \Theta: r_1(\theta,\theta')\neq r_2(\theta,\theta') \quad \text{or}\quad   r_1(\theta',\theta) \neq r_2(\theta',\theta)\}
\]
\end{theorem}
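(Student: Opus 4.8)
The plan is to reduce validity of $D$ to a single pointwise detailed-balance identity and then settle it by a short case analysis according to whether $(\theta,\theta')$ lies in $K$. Since a decision rule $D:\Theta\times\Theta\to\{1,2\}$ is a deterministic function of the pair $(\theta,\theta')$, the expected acceptance probability attached to $\hat a_D$ factors cleanly as $\bE\min\{\hat a_D(\theta,\theta'),1\}=r_{D(\theta,\theta')}(\theta,\theta')$, with $r_i$ as in Example \ref{eg:max-min decision rule}. Writing $C=\pi(\theta)q(\theta,\theta')$ and $C'=\pi(\theta')q(\theta',\theta)$, validity of $D$ is therefore exactly the requirement
\[
C\, r_{D(\theta,\theta')}(\theta,\theta') = C'\, r_{D(\theta',\theta)}(\theta',\theta)\qquad\text{for all }(\theta,\theta').
\]
The inputs I would feed in are the already-established validities of $\hat a_1$ and $\hat a_2$, which supply the two baseline identities $C\,r_i(\theta,\theta')=C'\,r_i(\theta',\theta)$ for $i=1,2$.

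First I would record a symmetry lemma: subtracting the two baseline identities gives $C\,(r_1-r_2)(\theta,\theta')=C'\,(r_1-r_2)(\theta',\theta)$, so (on the positive-weight region) one side vanishes iff the other does. This shows simultaneously that $K$ is a symmetric subset of $\Theta\times\Theta$ and that the defining ``or'' in $K$ is effectively an ``and''. With this in hand, the case analysis is quick. Off $K$ the two ratios agree at both $(\theta,\theta')$ and $(\theta',\theta)$, so the displayed equation holds regardless of which index $D$ selects; $K^c$ imposes no constraint. On $K$ with $D(\theta,\theta')=D(\theta',\theta)=i$, the equation is precisely the baseline validity of $\hat a_i$, so every symmetric rule is valid.

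The crux is the converse on $K$. If $D$ is asymmetric at a point of $K$, say $D(\theta,\theta')=1$ and $D(\theta',\theta)=2$, then the target equation reads $C\,r_1(\theta,\theta')=C'\,r_2(\theta',\theta)$; substituting $C'\,r_2(\theta',\theta)=C\,r_2(\theta,\theta')$ forces $r_1(\theta,\theta')=r_2(\theta,\theta')$, contradicting $(\theta,\theta')\in K$. Thus a valid rule must be symmetric on $K$, which also explains why the naive $\argmax$ of Example \ref{eg:invalid decision} fails. Combining the three cases gives both directions.

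The main obstacle I anticipate is not the algebra but the careful handling of the degenerate points where $C$ or $C'$ vanishes: there the baseline identities make both sides of detailed balance automatically zero, so no constraint arises, and I would either assume $\pi q>0$ on the relevant support or absorb such points into $K^c$ via the usual convention that $q(\theta,\theta')=0\iff q(\theta',\theta)=0$. Finally, I would note that the same computation run armwise immediately yields the validity of the max-min rule of Example \ref{eg:max-min decision rule}: that rule selects $i$ by the symmetric quantity $\min\{r_i(\theta,\theta'),r_i(\theta',\theta)\}$, hence is symmetric by construction and so valid by the theorem.
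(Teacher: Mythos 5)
Your proof is correct and takes essentially the same route as the paper's: both reduce validity of $D$ to the pointwise detailed-balance identity, feed in the baseline validity of $\hat a_1$ and $\hat a_2$, and conclude by a case split on $K$ versus $K^c$, with asymmetry on $K$ forcing $r_1 = r_2$ and hence a contradiction. The only difference is one of rigor: your multiplicative formulation, the explicit symmetry lemma showing the ``or'' in the definition of $K$ is effectively an ``and'', and your handling of points where $\pi(\theta)q(\theta,\theta')$ vanishes make precise the steps the paper's proof treats as clear.
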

\begin{proof}
For each fixed $(\theta,\theta')$, the detailed balance equation requires
\[  \pi(\theta) q(\theta,\theta') \bE\min\{\hat{a}_D(\theta, \theta'), 1\}  =  \pi(\theta') q(\theta',\theta) \bE\min\{\hat{a}_D(\theta', \theta), 1\}.\]

i.e., 
\[
 r_D(\theta,\theta')/ r_D(\theta',\theta) = \pi(\theta') q(\theta',\theta) /\pi(\theta)q(\theta,\theta').
\]

Meanwhile, as $\hat a_1$, $\hat a_2$ are valid ratios, we have
\[
 r_1(\theta,\theta')/ r_1(\theta',\theta) =  r_2(\theta,\theta')/ r_2(\theta',\theta) =
 \pi(\theta') q(\theta',\theta) /\pi(\theta)q(\theta,\theta').
\]
Therefore, for $(\theta,\theta')\in K$, it is clear that 
$D(\theta,\theta') = D(\theta',\theta)$. As $D(\theta,\theta') = 1, D(\theta',\theta) = 2$ or 
$D(\theta,\theta') = 2, D(\theta',\theta) = 1$ would breaks the detailed balance.

For $(\theta,\theta')\in K^c$, as $r_1(\theta,\theta') = r_2(\theta,\theta')$ and $r_1(\theta',\theta) = r_2(\theta',\theta)$. The two ratios are indistinguishable, thus $D(\theta,\theta‘)$ could be either $1$ or $2$.
\end{proof}
\begin{remark}
It could be reasonable to assume that $K = \Theta\times \Theta$ in real situations. Since $\hat a_1 and \hat a_2$ are two different estimators, usually the condition $ r_1(\theta,\theta')\neq r_2(\theta,\theta')$  is satisfied naturally.
\end{remark}
Therefore, given a proposal from $\theta$ to $\theta'$, to preserve the detailed balance, the decision $D$ has to be then same as the `reversed proposal' from $\theta'$ to $\theta$. This implies the decision rule in Example \ref{eg:invalid decision} is invalid. 
\begin{corollary}
The max-min decision rule in Example \ref{eg:max-min decision rule} is valid as $D(\theta, \theta')$ is symmetric by design.
\end{corollary}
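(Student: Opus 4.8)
The plan is to reduce the claim directly to Theorem \ref{thm:valid decision rule}, which characterizes validity as symmetry of $D$ on the set $K$. So the entire task is to check that the max-min rule satisfies $D(\theta,\theta') = D(\theta',\theta)$ for every pair, and then quote the theorem.

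First I would isolate the objective being maximized. Write $g_i(\theta,\theta') \doteq \min\{r_i(\theta,\theta'),\, r_i(\theta',\theta)\}$ for $i\in\{1,2\}$, so that by definition $D(\theta,\theta') = \argmax_{i} g_i(\theta,\theta')$. The key observation is that each $g_i$ is a symmetric function of its two arguments: exchanging $\theta$ and $\theta'$ merely interchanges the two entries inside the inner minimum, and since $\min\{u,v\}=\min\{v,u\}$ we obtain $g_i(\theta,\theta') = g_i(\theta',\theta)$ for each fixed $i$. This is precisely the reason the rule in Example \ref{eg:max-min decision rule} was built from a minimum over the forward and reversed proposals rather than from $r_i(\theta,\theta')$ alone; the invalid rule of Example \ref{eg:invalid decision} fails exactly because $\hat a_i(\theta,\theta')$ is not symmetric.

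Next, since $g_1(\theta,\theta')=g_1(\theta',\theta)$ and $g_2(\theta,\theta')=g_2(\theta',\theta)$, the pair of values that the $\argmax$ compares at $(\theta,\theta')$ is identical to the pair it compares at $(\theta',\theta)$. Hence the maximizing index is unchanged, i.e. $D(\theta,\theta') = D(\theta',\theta)$ for all $(\theta,\theta')\in\Theta\times\Theta$, and in particular on $K$. Invoking Theorem \ref{thm:valid decision rule} then yields that $D$ is valid, completing the argument.

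The only point requiring care is well-definedness of the $\argmax$ at a tie $g_1(\theta,\theta')=g_2(\theta,\theta')$; this is a routine technicality rather than a genuine obstacle. I would fix any deterministic tie-breaking convention (say, return index $1$ on ties). Because that decision depends only on the symmetric values $(g_1,g_2)$, it necessarily returns the same index at $(\theta,\theta')$ and at $(\theta',\theta)$, so symmetry is preserved however ties are resolved. Moreover, a tie forces $r_1(\theta,\theta')=r_2(\theta,\theta')$ and $r_1(\theta',\theta)=r_2(\theta',\theta)$, placing the pair in $K^c$, where Theorem \ref{thm:valid decision rule} already permits an arbitrary choice; thus ties cannot threaten validity in the first place.
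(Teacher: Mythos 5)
Your proof is correct and takes essentially the same route as the paper: the paper justifies the corollary exactly by observing that $\min\{r_i(\theta,\theta'),\,r_i(\theta',\theta)\}$ is unchanged under swapping $\theta$ and $\theta'$, so the rule is symmetric by design, and then invoking Theorem \ref{thm:valid decision rule}. Your explicit handling of ties (noting that a tie forces $r_1(\theta,\theta')=r_2(\theta,\theta')$ and $r_1(\theta',\theta)=r_2(\theta',\theta)$, i.e.\ the pair lies in $K^c$ where the theorem allows either choice) is a small piece of extra rigor that the paper leaves implicit.
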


\begin{corollary}
The max decision rule in Example \ref{eg:invalid decision} is invalid as $D(\theta, \theta')$ is not symmetric.
\end{corollary}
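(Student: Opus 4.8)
The plan is to establish invalidity as the contrapositive of Theorem~\ref{thm:valid decision rule}: a valid rule must be symmetric on $K$, and the max rule violates this. The wrinkle is that the max rule is genuinely random --- the chosen index $D(\theta,\theta') = \argmax_{i}\hat a_i(\theta,\theta')$ is fixed by the \emph{realized} estimator values, not by deterministic quantities as in the max-min rule --- so the symmetry criterion cannot be invoked verbatim. First I would therefore pin down the acceptance probability this rule actually induces, and then exhibit a configuration in which the detailed balance equation of Definition~\ref{def:valid ratio} fails.

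The first concrete step is to rewrite the induced acceptance probability. Since $t \mapsto \min\{t,1\}$ is nondecreasing, we have the identity $\min\{\max\{\hat a_1,\hat a_2\},1\} = \max\{\min\{\hat a_1,1\},\min\{\hat a_2,1\}\}$, so writing $\tilde a_i(\theta,\theta') \doteq \min\{\hat a_i(\theta,\theta'),1\}$ the averaged acceptance probability of the max rule is
\[
r_D(\theta,\theta') = \bE\,\max\{\tilde a_1(\theta,\theta'),\tilde a_2(\theta,\theta')\}.
\]
Because $\hat a_1,\hat a_2$ are valid, both $r_1$ and $r_2$ obey $r_i(\theta,\theta')/r_i(\theta',\theta) = c$, where $c \doteq \pi(\theta')q(\theta',\theta)/(\pi(\theta)q(\theta,\theta'))$, so validity of the max rule would force the identical relation $r_D(\theta,\theta') = c\,r_D(\theta',\theta)$.

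The crux is that $\bE\max\{\tilde a_1,\tilde a_2\}$ is a nonlinear functional of the \emph{joint} law of $(\tilde a_1,\tilde a_2)$, not merely of the marginal means $r_1,r_2$; indeed $\bE\max\{\tilde a_1,\tilde a_2\} = r_i + \bE(\tilde a_{3-i}-\tilde a_i)^+$, and the correction term is governed by the correlation structure. The forward and reverse moves sample their auxiliary variables independently and under different laws, so the two correction terms generally are not in the ratio $c$, and $r_D(\theta,\theta') = c\,r_D(\theta',\theta)$ breaks. I would make this precise with a minimal discrete counterexample: take $c = 1$, let the forward estimators be perfectly coupled (so $\bE\max = \max\{r_1,r_2\}$) and the reverse estimators have the same marginal means but be drawn independently (so $\bE\max > \max\{r_1,r_2\}$), arranging $r_1 \neq r_2$ so that $(\theta,\theta') \in K$. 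This yields $r_D(\theta,\theta') \neq r_D(\theta',\theta)$ and thus violates detailed balance. It also displays the non-symmetry directly: the realized $\argmax$ forward is decided by randomness independent of the reverse move, so $D(\theta,\theta')$ cannot be forced to equal $D(\theta',\theta)$, exactly the necessary condition of Theorem~\ref{thm:valid decision rule} that fails.

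The hard part will be conceptual rather than computational: one must resist naively applying the symmetry criterion (stated for an index-selecting rule) and instead argue at the level of the averaged ratios $r_D$, i.e., the detailed balance equation itself. Once the acceptance probability is reduced via the min-max identity to $\bE\max\{\tilde a_1,\tilde a_2\}$, the counterexample and hence the failure of balance are routine.
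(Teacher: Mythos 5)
Your proposal is correct, but it takes a genuinely different route from the paper's. The paper disposes of this corollary in one line: it treats the max rule as a decision rule in the sense of Theorem~\ref{thm:valid decision rule}, observes that the realized $\argmax$ over the forward estimates need not agree with the realized $\argmax$ over the reverse estimates, and invokes the ``valid $\Rightarrow$ symmetric'' direction of that theorem. You instead go back to Definition~\ref{def:valid ratio} itself: via the identity $\min\{\max\{\hat a_1,\hat a_2\},1\} = \max\{\min\{\hat a_1,1\},\min\{\hat a_2,1\}\}$ you identify the induced acceptance probability as $\bE\max\{\tilde a_1,\tilde a_2\}$, note that this is a functional of the \emph{joint} law of $(\tilde a_1,\tilde a_2)$ rather than of the means $r_1,r_2$ alone, and then violate detailed balance with an explicit counterexample in which the forward pair is almost surely ordered while the reverse pair is independent with overlapping ranges (e.g., with $c=1$: forward ratios constant at $0.4$ and $0.6$, reverse ratios $0.6$ constant versus $0.8$/$0$ with equal probability, giving $0.6 \neq 0.7$). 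This buys real rigor that the paper's one-liner lacks: Theorem~\ref{thm:valid decision rule} and its proof handle $D$ as a deterministic index-valued function, so that $r_D$ means $r_{D(\theta,\theta')}$, and it does not literally apply to a rule whose value depends on the realized estimators --- precisely the gap you flag --- whereas your argument needs no extension of the theorem, and the identity $\bE\max\{\tilde a_1,\tilde a_2\} = r_i + \bE(\tilde a_{3-i}-\tilde a_i)^+$ isolates the selection bias as the exact mechanism of failure. The paper's approach buys brevity and the correct intuition (forward and reverse decisions must coincide); yours buys an actual proof. Two small refinements: ``perfectly coupled'' should be sharpened to ``almost surely ordered'' (comonotonicity is not enough --- take $\tilde a_1$ uniform and $\tilde a_2 \equiv 0.5$, which are comonotone yet $\bE\max = 0.625 > 0.5$ --- so it is cleanest to use constants), and the condition $(\theta,\theta')\in K$ is not actually needed in your argument, since you contradict Definition~\ref{def:valid ratio} directly rather than passing through the symmetry characterization.
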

\subsection{MABMC: Multi-armed Bandit MCMC, algorithm and intuition}\label{subsec: MABMC, algorithm and intuition}
The Multi-armed Bandit MCMC Algorithm(MABMC) incorporates the Pesudo-marginal Monte Carlo (PMC) and Exchange algorithm (SVE) adaptively, according to the max-min decision rule:
\begin{algorithm}
\caption{Multi-armed Bandit MCMC (MABMC)}\label{alg:MABMC}
\hspace*{\algorithmicindent} \textbf{Input:} initial setting $\theta$, number of iterations $T$ \\

\begin{algorithmic}[1]

\For{$t= 1,\cdots T$}
\State Propose $\theta'\sim q(\theta'|\theta)$
\State Generate auxiliary variables:  $y \sim \pi(y|x,\theta), y' \sim f_{\theta'}(y')/Z(\theta') , w\sim f_{\theta'}(w)/Z(\theta')$
\State Compute 
\begin{align*}
& a_1 =  \frac{\pi(\theta')q(\theta|\theta')f_{\theta'}(x)}{\pi(\theta)q(\theta'|\theta)f_{\theta}(x)}\cdot \frac{f_\theta(y)\pi(y'|x,\theta')}{f_{\theta'}(y')\pi(y|x,\theta)},~~ r_1 = \min\{a_1,1\} \\
& a_2 =  \frac{\pi(\theta')q(\theta|\theta')f_{\theta'}(x)}{\pi(\theta)q(\theta'|\theta)f_{\theta}(x)}\cdot \frac{f_\theta(w)}{f_{\theta'}(w)},~~ r_2 = \min\{a_2, 1\}
\end{align*}

\State Generate auxiliary variables:  $\tilde y \sim \pi(y|x,\theta'), \tilde y' \sim f_{\theta}(y')/Z(\theta) , \tilde w\sim f_{\theta}(w)/Z(\theta)$

\State Compute 
\begin{align*}
& \tilde a_1 =  \frac{\pi(\theta)q(\theta'|\theta)f_{\theta}(x)}{\pi(\theta')q(\theta|\theta')f_{\theta'}(x)}\cdot \frac{f_{\theta'}(\tilde y)\pi(\tilde y'|x,\theta)}{f_{\theta}(\tilde y')\pi(\tilde y|x,\theta')},~~ \tilde r_1 = \min\{\tilde a_1,1\} \\ \\
& \tilde a_2 =  \frac{\pi(\theta)q({\theta'}|\theta)f_{\theta}(x)}{\pi({\theta'})q(\theta|{\theta'})f_{\theta'}(x)}\cdot \frac{f_{\theta'}(\tilde w)}{f_{\theta}(\tilde w)}, ~~ \tilde r_2 = \min\{\tilde a_2, 1\}
\end{align*}
\State Choose $$D = \argmax_{i\in \{1,2\}} \min\{r_i, \tilde r_i\}$$
\State \textbf{If} $(D = 1)$ \textbf{then} \text{repeat Step $2-7$ in  Algorithm \ref{alg:mpseudo}}
\State \textbf{If} $(D = 2)$ \textbf{then} \text{repeat Step $2-7$ in  Algorithm \ref{alg:Exchange}}

\EndFor
\end{algorithmic}
\end{algorithm}

Roughly speaking, MABMC is a way of choosing between to algorithms adaptively, according to the max-min decision rule introduced in Section \ref{subsec:decision rule}. Theoretically any valid decision rule (also defined in Section \ref{subsec:decision rule}) would give a new Markov chain satisfying detailed balance. However, implementing simple decisions rules like Example \ref{eg: simple decision rule} would degenerate the algorithm to Algorithm \ref{alg:Exchange} or Algorithm \ref{alg:mpseudo} without essential improvement. It will be shown later that Algorithm \ref{alg:MABMC} do improve the average acceptance rate, comparing with Algorithm \ref{alg:Exchange} and Algorithm \ref{alg:mpseudo}.

To explain the intuition of MABMC, let's go back to the simple invalid example \ref{eg:invalid decision}. Given $\hat a_1(\theta, \theta')$ and $\hat a_2(\theta,\theta')$, the intuitively natural choice 

$$  D(\theta,\theta') = \argmax_{i \in 1,2} \{\hat{a}_i(\theta,\theta')\}$$

as described in example \ref{eg:invalid decision} is itself invalid. But based on this decision rule, we could do a `symmetrization' as described in example \ref{eg:max-min decision rule}, making it to be a valid decision rule without losing too much. Heuristically speaking, it should be true that 
\[
\argmax_{i \in 1,2} \{\min\{\hat{r_i}(\theta,\theta') , \hat{r_i}(\theta',\theta)\} \approx \argmax_{i \in 1,2} \{\hat{a}_i(\theta,\theta')\}.
\]

This heuristics comes from the fact that $a(\theta, \theta') = \frac{1}{a(\theta', \theta)}$ by design of M-H algorithm. Therefore, if $\hat a_1$, $\hat a_2$ are both reasonable estimates of $a$, then the maximum of  $\hat r_i(\theta,\theta')$ and $\hat r_i(\theta',\theta)$ should be approximately $1$, as $a(\theta,\theta')$ and $a(\theta', \theta)$ can not be both less than $1$. We have, 
\begin{align*}
\argmax_{i \in 1,2} \{\min\{\hat{r_i}(\theta,\theta') , \hat{r_i}(\theta',\theta)\} & \approx \argmax_{i \in 1,2} \{\hat{r}_i(\theta,\theta')\} \\
& \approx \argmax_{i \in 1,2} \{\hat{a}_i(\theta,\theta')\}.
\end{align*}
Therefore the max-min decision rule in example \ref{eg:max-min decision rule} should be close to the max decision rule in example \ref{eg:invalid decision}, but is itself valid.

\section{Numerical Example}
\subsection{Normal example}
We consider a concrete example for which all the computations can be done easily. Consider the problem of sampling from the posterior of $\theta$, which has likelihood $p_\theta(y) \sim \cN(\theta, \sigma^2)$ with a conjugate prior $\pi(\theta) \sim \cN(0, 1)$, standard calculation gives the posterior distribution
\[
\pi(\theta|x) \sim \cN(\frac{y}{1+\sigma^2}, \frac{\sigma^2}{1 + \sigma^2}).
\]
The likelihood has the form 
\[
p_\theta(y) = \frac{1}{\sqrt{2\pi\sigma^2}}e^{- \frac{(y - \theta)^2}{2\sigma^2}}
\]
which is tractable. However, we pretend as if the normalizing constant $\frac{1}{\sqrt{2\pi\sigma^2}} $ is unknown to us. The result of the average acceptance probabilities by MPMC, SVE and MABMC is reported.

For each $\sigma^2$ ranging from $0.1$ to $1$, MPMC, SVE and MABMC is implemented for $20000$ iterations respectively. The transition kernel $q(\theta'|\theta) \sim \cN (\theta, 1)$, meanwhile $\pi(y|x,\theta)$ is chosen to be a normal distribution $\cN(\theta + \frac 13, \sigma^2)$, $y = 1$, figure \ref{fig:MABMC_gaussian} reports the average acceptance probability for MABMC, MPMC and SVE respectively. 

Figure \ref{fig:MABMC_gaussian} shows MABMC achieves higher average acceptance probability than both MPMC and SVE for all choices of $\sigma^2$, which shows such an max-min decision rule do improve the performance of the corresponding Markov chain. In this artificial example, the performance of MPMC and SVE is similar (the blue curve and green curve are similar in the plot), if one algorithm always performs better than the other, then the decision rule will basically choose one algorithm, and thus the performance of MABMC will be similar to the better one. Meanwhile, while for all $\sigma^2$, MABMC performs better than MPMC and SVE, it is clear that the improvement decreases as $\sigma^2$ increases. This is natural since when $\sigma^2$ increases, the likelihood would be flatter and thus an attempting move is more likely to be accepted, thus all three algorithms would have higher acceptance probability and the improvement of MABMC would be less significant. It turns out that for more peaked distribution (corresponding to small values of $\sigma^2$, the advantage of MABMC is more significant.
\begin{figure}[htbp]
\includegraphics[width= \textwidth]{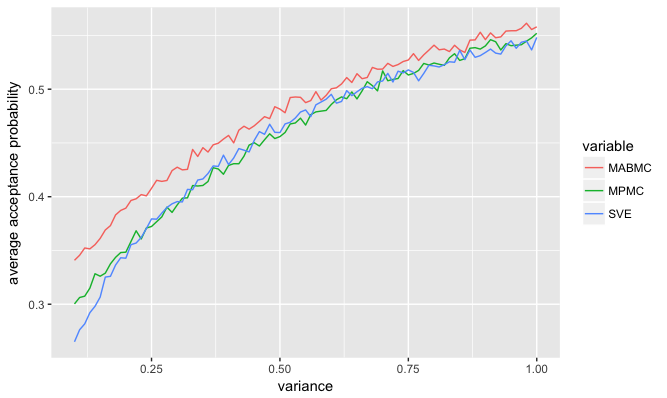}
\caption{Average acceptance probabilities as a function of $\sigma^2$ for a $\cN(\theta, \sigma^2)$ likelihood.}
\label{fig:MABMC_gaussian}
\end{figure}
\subsection{Ising Example}
We have also considered the 2-D Ising distribution on on a square lattice $\Lambda$ with $N$ sites. For each site $k \in\Lambda$ there is  a discrete variable $\sigma_k \in \{-1, +1\}$ representing the site's spin. A spin configuration $\sigma$ is an assignment of spin  value to each lattice site. The configuration probability is given by the Boltzmann distribution with parameter $\beta$:
\[
p_\beta(\sigma) = \frac{e^{-\beta H(\sigma)}}{Z(\beta)},
\]
where $H(\sigma) = -J\sum_{\langle i, j \rangle} \sigma_i \sigma_j $, the notation $\langle i, j \rangle $ indicates that sites $i$ and $j$ are nearest neighbors.
The normalizing constant
\[
Z(\beta) = \sum_\sigma e^{-\beta H(\sigma)}
\]
is intractable for large $N$, as the summation contains $2^{N^2}$ terms. Throughout the experiment, we set the interaction parameter $J = 0.1$, the number of sites $N = 10$.

For each $\beta$, we have generated $10$ Ising configurations using the Wolff algorithm \cite{wolff1989collective}, a normal prior was put on $\beta$ and we sample from posterior using  MPMC, SVE and MABMC respectively. The auxiliary density for MPMC is chosen to be $\pi(y|\text{data},\beta) = p_{\hat \beta}(y)$, where $\hat\beta$ is the  maximum pseudo-likelihood estimator (MPLE)  derived in \cite{besag1986statistical} for $\beta$. 

Figure \ref{fig:MABMC_Ising} shows, the MABMC algorithm overall outperforms better than SVE and MPMC algorithm. In particular, if SVE is significantly better than MPMC (or vice versa), MABMC would have acceptance probability close to the better one. If SVE and MPMC perform similarly, MABMC would be able to achieve a higher acceptance probability.
\begin{figure}[htbp]
	\includegraphics[width= \textwidth]{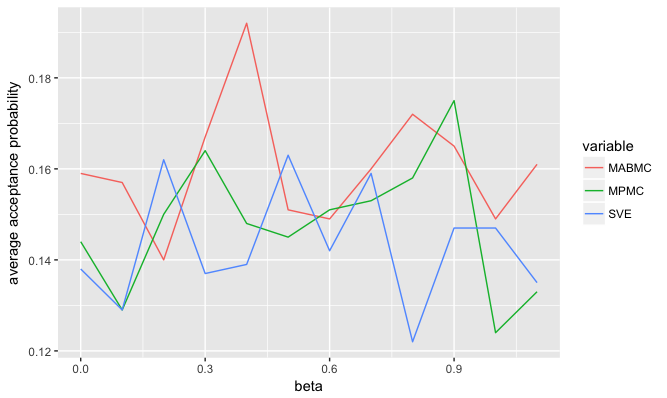}
	\caption{Average acceptance probabilities as a function of $\beta$ for  Ising model.}
	\label{fig:MABMC_Ising}
\end{figure}
\newpage
\section{Discussion}\label{sec:Discussion}
In this paper we have discussed the two most popular methods of tackling the doubly intractable problem -- Pseudo-marginal Monte Carlo (PMC) and exchange algorithm (SVE) in a statistical point of view. We have proposed MABMC which chooses between PMC and SVE for the `better' one in each step of the Markov chain iteration. It turns out that MABMC is easy to implement and achieves better average acceptance probability than PMC and SVE. However, many unknown questions remains:
\begin{itemize}
    \item MPMC and SVE are two special cases of RMCMC algorithms, aiming for solving the doubly intractable problems. However, MABMC algorithms works for general MCMC algorithms with randomized acceptance ratios, given the transition kernels and target distributions are the same. It would be interesting to find out the applications of MABMC in other disciplines other than the doubly intractable problem.
    \item The MABMC (min-max decision rule) comes from the intuition of modifying the natural invalid max decision rule (Example \ref{eg:invalid decision}), see \ref{subsec: MABMC, algorithm and intuition} for details. However, there is no guarantee that MABMC a is Bayesian decision rule, or even admissible (see Definition \ref{def: optimal decision rule}, \ref{def:inadmissible decision rule}), therefore there is still a large room to improve MABMC with respect to some metrics, or find another decision rule which dominates MABMC. 
    \item The convergence property of MABMC is unknown to us. It is clear that MABMC is less statistically efficient than the intractable original M-H algorithm. However, assuming the original M-H algorithm is uniformly ergodic or geometrically ergodic, it is worth investigating the condition that could ensure MABMC inherits such properties.
\end{itemize}

\newpage
\bibliographystyle{alpha}
\bibliography{bibliography}
\end{document}